\newtheorem{theorem}{Theorem}[section]
\DeclareMathOperator{\dist}{dist}
\DeclareMathOperator{\grad}{\nabla}
\DeclareMathOperator{\sign}{sign}
\newcommand{\set}[1]{\{\,#1\,\}}
\newcommand{\R}{\mathbb{R}}
\title{A Branch and Cut Algorithm for the Halfspace Depth Problem\footnote{This research was partly supported by NSERC Canada. Computational Resources are supplied by ACEnet.}}
\author{David Bremner \and Dan Chen}
\date{}
\begin{document}
%\large
\maketitle

\begin{abstract}
The concept of \emph{data depth} in non-parametric multivariate descriptive statistics is the generalization of the univariate rank method to multivariate data. \emph{Halfspace depth} is a measure of data depth. Given a set $S$ of points and a point $p$, the halfspace depth (or rank) of $p$ is defined as the minimum number of points of $S$ contained in any closed halfspace with $p$ on its boundary. Computing halfspace depth is NP-hard, and it is equivalent to the Maximum Feasible Subsystem problem. In this paper a mixed integer program is formulated with the big-$M$ method for the halfspace depth problem. We suggest a branch and cut algorithm for these integer programs. In this algorithm, Chinneck's heuristic algorithm is used to find an upper bound and a related technique based on sensitivity analysis is used for branching. Irreducible Infeasible Subsystem (IIS) hitting set cuts are applied. We also suggest a binary search algorithm which may be more numerically stable. The algorithms are implemented with the BCP framework from the \textbf{COIN-OR} project.
\end{abstract}

\section{Introduction}
\label{sec:intro}
\emph{Halfspace depth} is a measure of \emph{data depth} which is a term from non-parametric multivariate descriptive statistics. The depth or rank of a point (a tuple data item) measures the centrality of this point with respect to a given set of points in a high dimensional space. The data with the largest depth is considered the median of the data set, which best describes the data set. Since tuple data items can be represented as points in Euclidean space $\mathbb{R}^{d}$, these two terms are used interchangeably in this paper.

Halfspace depth is also called Tukey depth or location depth. Given a set $S$ of points and a point $p$ in $\mathbb{R}^{d}$, the halfspace depth of $p$ is defined as the minimum number of points of $S$ contained in any closed halfspace with $p$ on its boundary. The point with the largest depth is called \emph{halfspace median} or \emph{Tukey median}.
\begin{figure}[!ht]
  \centering
  \includegraphics[width=0.35\textwidth]{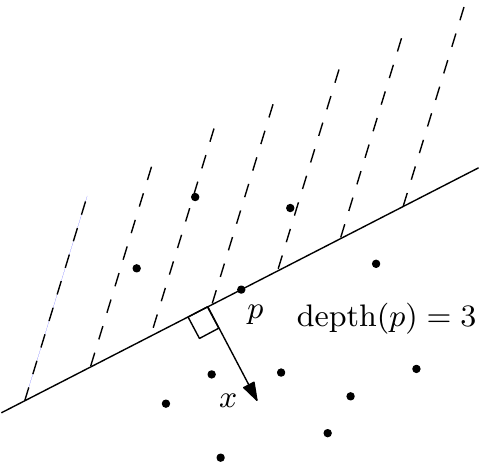}
  \caption{An example of halfspace depth in $\mathbb{R}^{2}$}
  \label{fig:hd1}
\end{figure}
In Figure~\ref{fig:hd1}, the halfspace depth of $p$ is $3$, because at least three points will be contained by any closed halfspace with $p$ on its boundary. The halfspace depth of $p$ can also be written as:
\begin{equation}
  \label{eq:1.1}
  \min_{x \in \mathbb{R}^{d} \backslash 0} | \{ q \in S  | x \cdot q \leq x \cdot p \} |
\end{equation}
where $x$ is the outward normal vector of the closed halfspace. This optimization problem is equivalent to  
\begin{equation}
  \label{eq:1.2}
  | S | - \max_{x \in \mathbb{R}^{d}} | \{ q \in S  | x \cdot q > x \cdot p \} |
\end{equation}
When a point is excluded from the halfspace, the corresponding inequality in \eqref{eq:1.2} is satisfied. The problem is thus to find a vector $x$ that maximizes the number of satisfied inequalities.

A data set is said to be in general position if it has no ties, i.e. no more than two points on the same line, no more than three points on the same plane, and so forth. If the data set is in general position, the halfspace depth problem is identical to the \emph{open hemisphere problem} introduced by Johnson and Preparata. Given a set of $n$ points on the unit sphere $S^{d}$ in $\mathbb{R}^{d}$, the open hemisphere problem is to find an open hemisphere of $S^{d}$ that contains as many points as possible. This problem is NP-complete if both $n$ and $d$ are parts of the input~\cite{Johnson}.

\subsection{Organization of This Paper}
\label{sec:intro.over}

This paper is organized as follows. In Section~\ref{sec:anova} we discuss on application of halfspace depth in non-parametric statistics. In Section~\ref{sec:mfs} we show that the halfspace depth problem is equivalent to the \emph{maximum feasible subsystem (MAX FS)} problem. In Section~\ref{sec:mip} we discuss different integer program formulations for the halfspace depth problem. In Section~\ref{sec:heur} we introduce Chinneck's heuristic algorithm for the MAX FS problem. In Section~\ref{sec:alg} we introduce our branch and cut algorithm. Unfortunately, in some circumstances this algorithm encounters numerical difficulties. For a more accurate result, we also introduce a binary search strategy in this section. In Section~\ref{sec:impl} we describe the details of our implementation using BCP; the BCP framework is also briefly introduced in this section. In Section~\ref{sec:test} we give some testing results and benchmark the performance of our algorithm. In Section~\ref{sec:fuwok} we list some future work for this project.

\section{Application: Two Factor ANOVA}
\label{sec:anova}
In this section we describe an application of halfspace depth to ANOVA (ANalysis Of VAriance) testing.  For more on the use of halfspace depth in this context, see~\cite{mizera2002,mizera:anova}. We consider here \emph{two factor} ANOVA testing, where an experiment has two different \emph{experimental factors} with \emph{levels} in $N=\set{ 1 \dots n }$ and $M= \set{1 \dots m}$. For example we may have $n$ soil types and $m$ fertilizers, and we wish to see how various combinations affect crop yield. For each experimental setting $(i,j)$ we have $r$ data points $z_{i,j,1} \dots z_{i,j,r}$ measuring outcomes. The subset $\set{ z_{i,j,k} \mid k=1\dots r}$ corresponding to an experimental scenario is fit to (i.e.\ approximated by) a linear function $\mu_i+\nu_j$. Our space of possible fits can be be identified with $\R^{n+m}$ via the parameter vector $\vartheta=(\mu_1 \dots \mu_n, \nu_1 \dots \nu_m)$. The quality of a given fit is evaluated using \emph{criterial functions}, typically using squared Euclidean distance like the following:
\begin{equation*}
  F_{i,j,k}(\vartheta)=\frac{(z_{i,j,k}-(\mu_i+\nu_j))^2}{2} \, .
\end{equation*}
It turns out that halfspace depth can be used to evaluate the local optimality of a given fit $\vartheta$. If there is another fit $\vartheta'$ in the neighbourhood of $\vartheta$ that improves (i.e. decreases) every criterial function, then $\vartheta$ is clearly not optimal. To quantify this observation, we want to measure, over all possible directions of perturbation for $\vartheta$, the maximum number of criterial functions that decrease. In terms of the gradients (i.e.\ vectors of partial derivatives) $\grad F_{i,j,k}(\vartheta)$ of our criterial functions, the change in $F_{i,j,k}$ as we perturb in direction $v$ is given by the inner product $\grad F_{i,j,k}(\vartheta)\cdot v$. Since we want to decrease criterial functions, we are looking for a direction $v$ such that the corresponding linear halfspace $v^T x \geq 0$ contains as few of the $\grad F_{i,j,k}(\vartheta)$ as possible. This is equivalent to asking for the halfspace depth of the origin in the set  of gradients (for fixed fit $\vartheta$, the gradients will be constant vectors in $\R^{n+m}$). In our case the gradients have a simple form 
\begin{align*}
  \grad F_{i,j,k}(\vartheta)&=-(z_{i,j,k}-\mu_i -\mu_j)(e_i+e_j)\\
\intertext{ Since we only care about the sign, we may scale the gradients arbitrarily and consider instead the $\set{0,\pm 1}$ vectors}
  G_{i,j,k}(\vartheta)&=\sign \grad F_{i,j,k} \\
  &=-\sign(z_{i,j,k}-\mu_i -\mu_j) (e_i+e_j)  
\end{align*}
We give computational results for some randomly generated examples of this type in Section~\ref{sec:test.2}.

\section{Maximum Feasible Subsystem}
\label{sec:mfs}
The halfspace depth problem has a strong connection with the Maximum Feasible Subsystem (MAX FS) problem. Given an infeasible linear system, the MAX FS problem is to find a maximum cardinality feasible subsystem. This problem is NP-hard~\cite{Chakravarti,Sankaran}, and also hard to approximate~\cite{Amaldi}.

When $p$ is contained in the convex hull of $S$, and $p$ is on the boundary of a closed halfspace, as illustrated in Figure~\ref{fig:hd1}, there must be some data contained by the halfspace. Then the set of inequalities
\begin{equation}
  \label{eq:2.1}
   x \cdot q >  x \cdot p \qquad \forall q \in S
\end{equation}
or
\begin{equation}
  \label{eq:2.2}
  x \cdot ( q - p ) > 0 \qquad \forall q \in S
\end{equation}
in \eqref{eq:1.2} can not all be satisfied at the same time, i.e.~\eqref{eq:2.2} is an infeasible linear system. Now it is clear that to compute the halfspace depth of $p$ is to find the maximum number of inequalities in \eqref{eq:2.2} that can be satisfied at the same time; in other words to find the maximum feasible subsystem of \eqref{eq:2.2}. Therefore, the halfspace depth problem is a MAX FS problem. The MAX FS problem can also be seen as finding a minimum cardinality set of constraints, whose removal makes the original infeasible system feasible. This problem is called the \emph{minimum unsatisfied linear relation (MIN ULR)} problem~\cite{Chinneck}.

\subsection{Irreducible Infeasible Subsystems}
\label{sec:mfs.iis}
In an infeasible linear system, an \emph{irreducible infeasible subsystem (IIS)} is a subset of constraints that itself is infeasible, but any proper subsystem is feasible. If a subset of points $A$ of $S$ forms a minimal simplex which contains $p$, the inequalities in \eqref{eq:2.2} defined by $A$ form an IIS. The point set $A$ is a \emph{minimal dominating set (MDS)}, which is is a minimal set of points whose convex hull contains $p$~\cite{David}.

Every infeasible system contains one or more IISs. To make the original system feasible, we need to delete at least one inequality from every IIS, in other words, we need to delete a hitting set of all IISs in the infeasible system. The \emph{minimum-cardinality IIS   set-covering (MIN IIS COVER)} problem is to find the smallest cardinality set of constraints to hit all IISs of the original system (this problem is a \emph{minimum hitting set} problem in the terminology of e.g.~\cite{GareyJohnson}, although it is called a set cover problem in~\cite{Chinneck, Parker}). The MIN IIS COVER set (hitting set) is the smallest set of constraints whose removal makes the original infeasible system feasible. Hence, the MIN IIS COVER problem is identical to the MIN ULR problem, and hence the MAX FS problem.

Parker gives a method for the MAX FS problem in~\cite{Parker}, and Pfetsch further develops this method in~\cite{Pfetsch}. Due to the fact that the infeasible system could contain an exponential number of IISs with respect to the number of constraints and the number of variables~\cite{Chakravarti}, the main idea of Parker's method is finding a subset of IISs in the whole problem and solving an integer program to find a minimum hitting set in each iteration. If the hitting set hits all IISs in the original infeasible system, the optimal solution is found. If not, find some IISs that are not hit by the current hitting set, then find (with an integer program) a new minimum hitting set that also hits the new IISs.

An important part of this method is finding IISs. Given a linear system $Ax \geq b$, where $A \in \mathbb{R}^{m \times n}$ and $b \in \mathbb{R}^{m}$, the following polyhedron:
\begin{equation}
  \label{eq:2.3}
  P = \{y \in \mathbb{R}^{m} | y^{T}A = 0, y^{T}b = 1, y \geq 0\}
\end{equation}
is defined as the \emph{alternative polyhedron}. Each vertex of $P$ corresponds to an IIS in the original infeasible system~\cite{Gleeson,Khach,Parker,Pfetsch}. More precisely, the index set of non-zero supports of a vertex corresponds to an IIS.

\section{Mixed Integer Program (MIP) Formulation}
\label{sec:mip}
Parker suggests two integer program formulations for the MIN IIS COVER problem in~\cite{Parker}. One is applying the big-$M$ method (see~\cite{Parker} and~\cite{Pfetsch}) to the inequalities in the infeasible system, and the other is based on the IIS inequalities. Suppose we have a group of data $\{A_{1}, A_{2}, \ldots, A_{n}\}$ and a point $A_{p}$ in Euclidean space $\mathbb{R}^{d}$, and $x$ is the normal vector of the halfspace that defines the halfspace depth of $A_{p}$. Finding the halfspace depth of $A_{p}$ is equivalent to finding the MIN IIS COVER $\Gamma$ of the following system:
\begin{equation}
  \label{eq:mip.iis1}
  \sum_{i = 1}^{d} (A_{j}^{i} - A_{p}^{i}) x_{i} > 0 \qquad \forall j \in \{1, 2, \ldots, n\}
\end{equation}
The depth of $A_{p}$ is $|\Gamma|$. To simplify~\eqref{eq:mip.iis1} , we call the vector $(A_{j} - A_{p})$ $a_{j}$. Then we rewrite~\eqref{eq:mip.iis1} as
\begin{equation}
  \label{eq:mip.iis2}
  a_{j}x > 0 \qquad \forall j \in \{1, 2, \ldots, n\}
\end{equation}

In~\cite{Parker}, Parker treats the MIN IIS COVER problem with the IIS inequalities. MIN IIS COVER is a minimum hitting set problem, and the hitting set has at least one constraint in common with every IIS in the infeasible system. For an IIS $C$ in \eqref{eq:mip.iis2}, we can use the binary variables associated with the constraints in $C$ to formulate an inequality like
\begin{equation}
  \label{eq:mip.iisinq}
  \sum_{t \in C} s_{t} \geq 1
\end{equation}
where $s_{t}$ is the binary variable associated with constraint $t$ in \eqref{eq:mip.iis2}. Using the IIS inequalities, a hitting set integer program is formulated in the following form:
\begin{eqnarray}
  \label{eq:mip.iismip}
  \textrm{minimize} \qquad \sum_{i = 1}^{n} s_{i} & & \nonumber \\
  \textrm{subject to} \qquad
  \sum_{i \in C} s_{i} & \geq & 1 \qquad \forall C \quad \textrm{(IIS of system~\eqref{eq:mip.iis2})} \\
  s_{i} & \in & \{0 , 1\} \qquad \forall i \in \{1, 2, \ldots, n\} \nonumber
\end{eqnarray}
As we mentioned in Section~\ref{sec:mfs.iis}, Parker's strategy is to first find a small set of IISs and formulate an integer program (a sub-program of \eqref{eq:mip.iismip}). After obtaining the optimal solution to the initial integer program, find some IISs that are not hit by the solution, add the corresponding IIS inequalities into the integer program and resolve it. The process stops when the solution hits all IISs in the infeasible system. This is easy to verify since it means exactly that the remaining system is feasible.

To formulate an integer program with the big-$M$ method, the strict inequalities in \eqref{eq:mip.iis2} need to be transformed into non-strict ones. From \eqref{eq:mip.iis2}, we can derive the following possibly infeasible system:
\begin{equation}
  \label{eq:mip.iiseps}
  a_{j}x \geq \epsilon \qquad \forall j \in \{1, 2, \ldots, n\}
\end{equation}
where $\epsilon$ is a small positive real number. A mixed integer program can be formulated with the big-$M$ method as follows:
\begin{eqnarray}
  \label{eq:mip.mip}
  \textrm{minimize} \qquad \sum_{j = 1}^{n} s_{j} & & \nonumber \\
  \textrm{subject to} \qquad
  a_{j}x + s_{j}M & \geq & \epsilon \qquad \forall j \in \{1, 2, \ldots, n\} \\
  s_{j} & \in & \{0 , 1\} \qquad \forall j \in \{1, 2, \ldots, n\} \nonumber \\
  - \infty \leq & x_{i} & \leq + \infty \qquad \forall i \in \{1, 2, \ldots, d\} \nonumber
\end{eqnarray}
Fixing the binary variable $s_{j}$ to $1$ has the effect of removing constraint $j$ from~\eqref{eq:mip.iis2}. The objective function is to minimize the number of constraints that have to be removed in order to find a feasible subsystem of~\eqref{eq:mip.iis2}. For the general MIN IIS COVER problems, the big-$M$ method may not be practical. As Parker and Pfetsch mentioned, the big-$M$ should be big enough to make the infeasible system feasible, but if it is too big, this will cause numerical problems (see~\cite{Parker} for details).

In this paper we investigate the big-$M$ method for the halfspace depth problem. In this formulation, it is easy to find a value for $M$ to make~\eqref{eq:mip.mip} feasible, but the value of $M$ should be large enough to guarantee an accurate result. For example, if $M$ is assigned to $\epsilon$, \eqref{eq:mip.mip} will be feasible for $x=0$, but the optimal solution will not be the MIN IIS COVER of \eqref{eq:mip.mip} because all the binary variables will be forced to $1$\label{page:bigm}. Now we describe our methods to choose values for $\epsilon$ and $M$ in~\eqref{eq:mip.mip}.

\subsection{Fixing the parameter $M$}
\label{sec:mip.m}

Since we only care about the sign of $a_{j}x$, instead of using infinity, we can choose an arbitrary bound $-c \leq x_{i} \leq c$ for each element of vector $x$, where $c$ is a constant. From the definition of inner product, we can get the following observation:
\begin{equation}
  \label{eq:3.6}
   x \cdot q = \lVert x \rVert \cdot \lVert q \rVert \cdot \cos\alpha \leq \lVert x \rVert \cdot \lVert q \rVert
\end{equation}
Suppose point $q_{max}$ is the vector with the largest norm in the data set, then $M$ can be set to the value of $\sqrt{d \cdot c^{2}} \cdot \lVert q_{max} \rVert$. Actually, we can assign a different value to $M$ in each constraint, namely $\sqrt{d \cdot c^{2}} \cdot \lVert q \rVert$ in the constraint associated with vector $q$. Even further, we can scale our vectors $q$, and make all data have the same norm.

\subsection{Fixing the parameter $\epsilon$}
\label{sec:mip.epsilon}
Now let us consider the value of $\epsilon$. When computing the halfspace depth, maximizing the number of points contained in an open halfspace is the same as maximizing the number of points contained in a cone with solid angle less than $\pi$. For instance, if the open halfspace in Figure~\ref{fig:hdhd} is replaced by a cone with angle close enough to $\pi$ (see Figure~\ref{fig:hdcone}), we can still have the same depth value for $p$.
\begin{figure}[!ht]
  \centering
  \subfigure[]{\label{fig:hdhd}
    \includegraphics[width=0.30\textwidth]{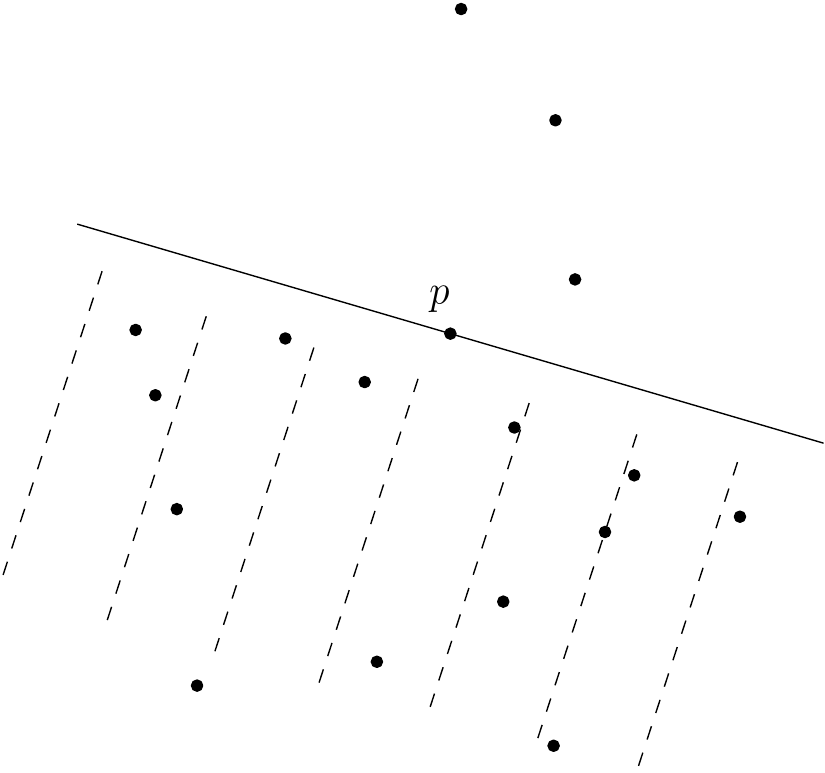}
  }
  \hspace{10mm}
  \subfigure[]{\label{fig:hdcone}
    \includegraphics[width=0.32\textwidth]{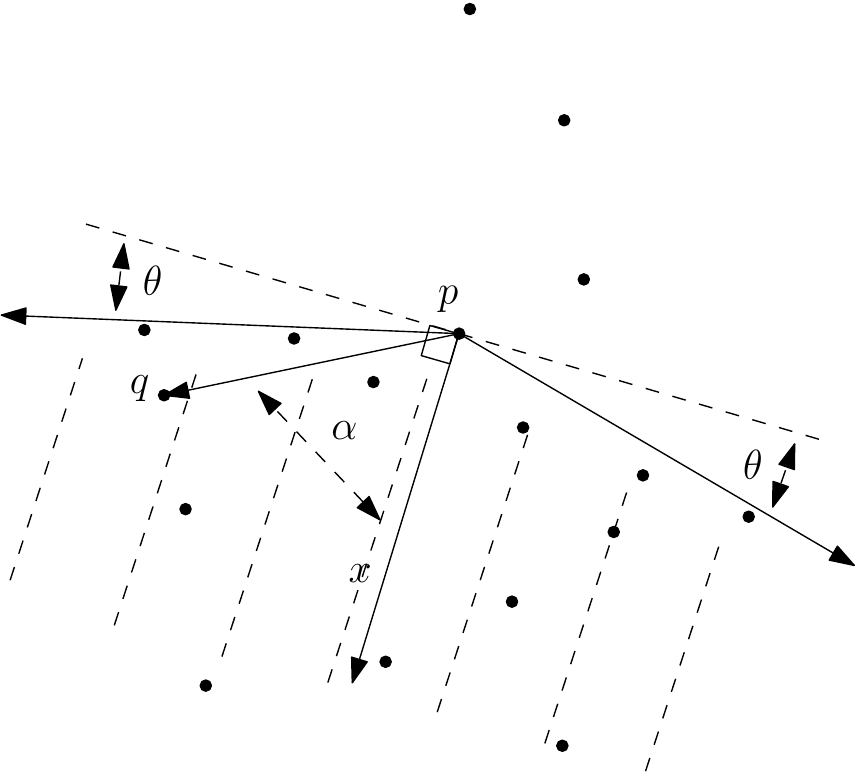}
  }
  \caption{Halfspace depth defined by an open halfspace and a cone}
  \label{fig:3.1}
\end{figure}

Let the angle between the boundary of the cone and the halfspace be $\theta$, $x$ be the outward normal vector of the halfspace, $q$ be a point contained by the cone, and $\alpha$ be the angle between $x$ and $q$ (interpreting $q$ as a vector, and $p$ as the origin). The definition of the inner product tells us
\begin{equation}
  \label{eq:3.4}
  x \cdot q = \lVert x \rVert \cdot \lVert q \rVert \cdot \cos\alpha = \lVert x \rVert \cdot \lVert q \rVert \cdot \sin(\frac{\pi}{2} - \alpha) \geq \lVert x \rVert \cdot \lVert q \rVert \cdot \sin\theta
\end{equation}
where the last inequality is illustrated in Figure~\ref{fig:hdcone}. Suppose point $q_{min}$ is the point with the smallest norm in the data set; then, no matter what value $x$ is in the optimal solution of \eqref{eq:mip.mip}, the following inequality is satisfied for any point $q$ in the data set.
\begin{equation}
  \label{eq:3.5}
  x \cdot q \geq \lVert x \rVert \cdot \lVert q_{min} \rVert \cdot \sin\theta
\end{equation}

Now the question is how to find the $\theta$. Given a value for $\theta$, $\epsilon$ can be set to the value of 
\begin{equation*}
\sqrt{d \cdot c^{2}} \cdot \lVert q_{min} \rVert \cdot \sin\theta\,.
\end{equation*}
The value of $\theta$ can be bounded using the theory of integer lattices. In this argument all the data are considered as integral data (fractional data can be scaled up to integral data), i.e.\ the input data set $S$ will be a subset of the integer lattice~\cite{Gruber}. The following theorem is due to Achill Sch\"urmann~\cite{Schuermann06}.
\begin{theorem}
  Suppose points $\{X_{1}, X_{2}, \ldots, X_{d}\}$ are affinely independent in $\mathbb{Z}^{d}$, $\lvert X_{i}^{j} \rvert \leq m$ , ($i, j = 1,2, \ldots, d$). Let $H$ be an affine hull of $\{X_{1}, X_{2}, \ldots, X_{d}\}$, and $H$ does not contain the origin $O$. Then we can have the following bound for the distance from $O$ to $H$, 
\begin{equation}
  \label{eq:3.dist}
   \dist{(H, O)} \geq (2m\sqrt{d})^{-(d-1)} \, .
\end{equation}
\end{theorem}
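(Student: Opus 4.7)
The plan is to express $H$ in the form $\{x \in \R^d : a \cdot x = b\}$ with an integer normal vector $a$ and integer offset $b$, so that the signed distance formula $\dist(O, H) = |b|/\|a\|$ reduces the problem to (i) showing $|b| \geq 1$ via an integrality argument and (ii) bounding $\|a\|$ from above via Hadamard's inequality.

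First I would translate to a linear problem by setting $Y_i = X_i - X_1$ for $i = 2, \dots, d$. These are $d-1$ vectors in $\Z^d$ whose coordinates are bounded in absolute value by $2m$. Since the $X_i$ are affinely independent, the $Y_i$ are linearly independent, and the $(d-1)$-dimensional direction of $H$ is their span. I would then define $a \in \Z^d$ as the generalized cross product of $Y_2, \dots, Y_d$, i.e. the vector whose $j$-th coordinate is $(-1)^{j+1}$ times the $(d-1) \times (d-1)$ determinant obtained by deleting the $j$-th column of the matrix with rows $Y_2, \dots, Y_d$. By construction $a$ has integer entries and is orthogonal to every $Y_i$, hence normal to $H$.

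Next I would set $b := a \cdot X_1$, which equals $a \cdot X_i$ for every $i$ because $X_i - X_1 \in \mathrm{span}(Y_2, \dots, Y_d)$. Since $O \notin H$, the scalar $b$ is nonzero; since it is an inner product of integer vectors, it is an integer. Therefore $|b| \geq 1$. For the upper bound on $\|a\|$, I would invoke the standard identity $\|a\| = \sqrt{\det(B^T B)}$, where $B$ is the $d \times (d-1)$ matrix with columns $Y_2, \dots, Y_d$; this is just the $(d-1)$-volume of the parallelepiped spanned by the $Y_i$. Applying Hadamard's inequality to $B^T$ yields
\begin{equation*}
\|a\|^2 = \det(B^T B) \leq \prod_{i=2}^d \|Y_i\|^2 \leq \bigl(2m\sqrt{d}\bigr)^{2(d-1)},
\end{equation*}
since each coordinate of $Y_i$ has absolute value at most $2m$ and there are $d$ coordinates.

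Combining these two bounds gives
\begin{equation*}
\dist(O, H) = \frac{|b|}{\|a\|} \geq \frac{1}{(2m\sqrt{d})^{d-1}} = (2m\sqrt{d})^{-(d-1)},
\end{equation*}
as required. The main conceptual obstacle is recognizing that one should pick the \emph{integer} normal vector given by the cross-product construction (rather than a normalized one), so that the lattice argument $|b| \geq 1$ becomes available; once this is in place, the remaining work is a routine application of Hadamard's inequality to control $\|a\|$.
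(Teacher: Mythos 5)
Your proof is correct, and it reaches the same final Hadamard estimate as the paper but by a more elementary and self-contained route. The paper works lattice-theoretically: it takes the affine lattice $l = X_1 + \sum_i \mathbb{Z}(X_i-X_1)$ inside $H$, compares it with the full section $l_0 = H \cap \mathbb{Z}^d$, uses the fact that the spacing between consecutive lattice hyperplanes parallel to $H$ is $\det(\mathbb{Z}^d)/\det(l_0)$ to get $\dist(H,O) \geq 1/\det(l_0)$, and then passes to $1/\det(l)$ via the integrality of the index $\det(l)/\det(l_0)$ before applying Hadamard to $\det(l) \leq \prod_{i}\lVert X_i - X_1\rVert$. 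You instead build the explicit integer normal $a$ via the generalized cross product, so that the lower bound on the numerator comes from the single observation that $b = a\cdot X_1$ is a nonzero integer, and the Cauchy--Binet identity $\lVert a\rVert^2 = \det(B^TB)$ identifies your denominator with the paper's $\det(l)$; the Hadamard step is then identical. What your version buys is that it sidesteps the facts about lattice hyperplane spacing and sublattice indices (which the paper states without proof), at the cost of having to verify that the cross-product vector is integral, nonzero, orthogonal to the $Y_i$, and has norm equal to the $(d-1)$-volume --- all of which you do or cite correctly. The two arguments are at bottom the same inequality dressed differently, and both are complete.
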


\begin{proof}
  Let
  \begin{displaymath}
    l := X_{1} + \mathbb{Z}(X_{2} - X_{1}) + \ldots +  \mathbb{Z}(X_{d} - X_{1})
  \end{displaymath}
be a lattice of $\mathbb{Z}^{d}$ within $H$. Let $l_{0} := H \cap
\mathbb{Z}^{d} $, then we have $l \subseteq l_{0}$. The distance $h$
of $H$ to a parallel plane containing lattice points is 
\begin{displaymath}
  \frac{\det{\mathbb{Z}^{d}}}{\det{l_{0}}} = \frac{1}{\det{l_{0}}} \, .
\end{displaymath}
 Then $\dist{(H, O)} \geq h \geq \frac{1}{\det{l_{0}}}$. Since 
${\det{l}}/{\det{l_{0}}} \in \mathbb{N}$, we have $\det{l} \geq \det{l_{0}}$. Therefore, we have $\dist{(H, O)} \geq \frac{1}{\det{l}}$. Let $C_{m} := \{ x \in \mathbb{R}^{d} : \lvert x_{j} \rvert \leq m\}$. According to Hadamard's inequality, we have $\det{l} \leq \prod_{i=2}^{d} \lVert X_{i} - X_{1} \rVert$. Since $\lVert X_{i} - X_{1} \rVert \leq \textrm{diameter}(C_{m}) = 2m\sqrt{d}$, $\dist{(H, O)} \geq (2m\sqrt{d})^{-(d-1)}$.
\end{proof}

Let us now return back to the idea of halfspace depth defined by a cone (see~\eqref{eq:mip.mip}). As shown in Figure~\ref{fig:lattice}, a distance $h$ defines a cone, and $p$ corresponds to the origin $O$ in the former paragraph. The value of $\sin\theta$ will be  at least ${h}/{\textrm{radius}(C)}$.
\begin{figure}[!ht]
  \centering
  \includegraphics[width=0.4\textwidth]{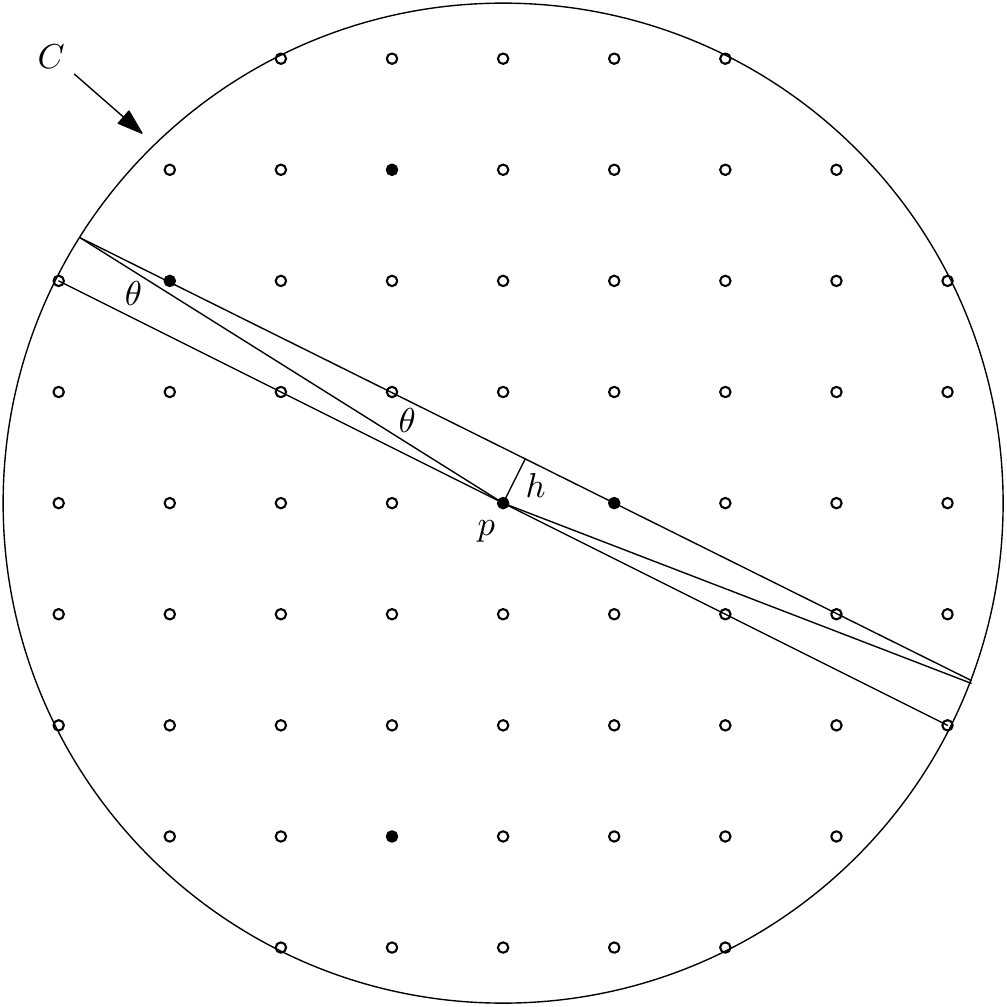}
  \caption{Lattice}
  \label{fig:lattice}
\end{figure}

When the dimension is high, such as $20$, the value of $\epsilon$ based on this lattice idea would be too small to be useful in practice. In our testing, we just set $\epsilon$ to a very small value. If $\epsilon$ is not small enough, it will have the same effect as $M$ not being big enough (see page~\pageref{page:bigm}).

\section{A Heuristic Algorithm}
\label{sec:heur}
Chinneck~\cite{Chinneck2, Chinneck} suggests a heuristic algorithm for the MIN IIS COVER problem. As discussed in Section~\ref{sec:mfs}, it is also an algorithm for the halfspace depth problem. This algorithm is based on several observations of elastic programming (originally a method for solving integer programs~\cite{Brown} according to Chinneck). In elastic programming, every constraint is elasticized by adding a non-negative elastic variable. Chinneck gives the following rules: the constraints in the form of $\sum_{j}a_{ij}x_{j} \geq b_{i}$, $\sum_{j}a_{ij}x_{j} \leq b_{i}$, or $\sum_{j}a_{ij}x_{j} = b_{i}$, become $\sum_{j}a_{ij}x_{j} + e_{i} \geq b_{i}$, $\sum_{j}a_{ij}x_{j} - e_{i} \leq b_{i}$, or $\sum_{j}a_{ij}x_{j} + e_{i}' - e_{i}'' = b_{i}$ respectively. The elastic objective function is to minimize the sum of the elastic variables, which is similar to phase~1 of the \emph{two phase} simplex method~\cite{Chvatal}. After elasticizing, the original infeasible system becomes feasible, and the optimal solution will give some information about the infeasibility in the original system. This elastic programming is also similar to the big-$M$ method. In the big-$M$ method, a set of binary variables with a large coefficient are used to make the infeasible system feasible. When the optimal solution of the elastic program is found, the optimal value of the objective function is called the \emph{sum of the infeasibility (SINF)}. A nonzero elastic variable indicates a violated constraint in the original model, and the number of the nonzero variables is called the \emph{number of infeasibility (NINF)}. As Chinneck observed, the MIN IIS COVER problem is the problem to minimize NINF. At the optimal point, the value of an elastic variable is called the \emph{constraint violation} of the corresponding constraint in the original model. The \emph{reduced cost} of the slack or surplus variable is called the \emph{constraint sensitivity} of the corresponding constraint, which, in fact, is the \emph{shadow price} of that constraint. The shadow price of a constraint indicates how much the objective value of the optimal solution will be changed by changing the right hand side of the constraint by one unit. For more details about elastic programming, please refer to~\cite{Chinneck1}.

The most important observation given by Chinneck is that SINF will be reduced more by eliminating a constraint in the MIN IIS COVER. For a violated constraint in the original model, the drop of the SINF can be estimated by (constraint violation) $\times$ $|$(constraint sensitivity)$|$. For an unviolated constraint, the drop can be estimated by $|$(constraint sensitivity)$|$. Detailed explanations of the observations are available in~\cite{Chinneck2,Chinneck1}. In the heuristic algorithm we basically estimate removing which constraint will reduce the SINF most, then remove that constraint from the infeasible system. We keep repeating this step until the system becomes feasible. The set of removed constraints is an IIS cover set.

\section{Techniques for This Algorithm}
\label{sec:alg}

In our branch and cut algorithm, we first use Chinneck's heuristic algorithm to find a feasible solution and set up the upper bound with this solution. Chinneck's heuristic algorithm is very fast and accurate. Most of the time this heuristic finds an optimal solution. Hence, we will have a very good upper bound at the beginning. The heuristic in~\cite{Chinneck} is used, because it is faster according to Chinneck. We apply IIS hitting set inequalities as cutting planes for the problem. The cuts are problem-specific. 

\subsubsection*{Basic infeasible subsystem cuts}
The paper \cite{Bremner1} describes how to generate a \emph{basic infeasible subsystem (BIS)} of an infeasible system. Given an infeasible system $Ax \geq b$ where $A \in \mathbb{R}^{n \times d}$ and $b \in \mathbb{R}^{n}$, the basic infeasible subsystem is an infeasible subsystem of cardinality no more than $d + 1$. To find a basic infeasible subsystem, the idea is to apply phase~1 of the two phase simplex method~\cite{Chvatal} by solving the following LP:
\begin{eqnarray}
  \label{eq:impl.phase1}
  \textrm{minimize} \qquad  x_{0} \quad \quad & & \nonumber \\
  \textrm{subject to} \qquad
  Ax +x_{0} & \geq & b
\end{eqnarray}
After getting the optimal solution, the set of tight constraints corresponds to a basic infeasible subsystem of $Ax \geq b$. For more details about basic infeasible subsystems, please refer to~\cite{Bremner1}. A basic infeasible subsystem may not be irreducible if it contains a degenerate IIS whose cardinality is smaller than $d + 1$.  On the other hand, since every IIS is basic, it suffices to find a hitting set for the BISs.

\subsubsection*{Pseudo-Knapsack Technique for Generating Cuts}
In order to generate cuts that are violated by the current solution of the LP relaxation, we use a pseudo-knapsack technique to find as many binary variables as possible with a summation smaller than $1$ (note that the binary variables will become continuous variables in the LP relaxation, and with bounds $0 \leq x_{i} \leq 1$ for any variable $x_{i}$). After solving a LP relaxation, the binary variables are ranked according their values in increasing order. Select the first $k$ variables ($k$ is maximal) such that the summation of them is smaller than $1$. Find the IISs in the corresponding constraints (in \eqref{eq:mip.iiseps}) of these variables. Such an IIS must give a violated cutting plane for the current solution of the LP relaxation.

In fact, identifying the maximum set of binary variables is not a true knapsack problem, because in this problem the cost and the value of an item (a binary variable) are the same. The greedy method in the above paragraph will give the optimal solution of this pseudo-knapsack problem. We can prove this by contradiction. Suppose $\{a_{1}, a_{2}, \ldots, a_{n}\}$ is the set of the values of the binary variables in increasing order, the greedy method identifies the first $k$ items, and a better algorithm identifies a set $J$ of $j$ items ($j > k$). The sum of any $k + 1$ items in $J$ is greater or equal to $\sum_{i=1}^{k+1}a_{i}$, because if $J$ contains any items that are different from the items in $\{a_{1}, a_{2}, \ldots, a_{k + 1}\}$, any of those different items would be greater or equal to $a_{k + 1}$. Hence, the sum of the items in $J$ would be greater than $1$, noting that $\sum_{i=1}^{k+1}a_{i} > 1$. Therefore, a better algorithm cannot exist.

This technique is used in one of the two hitting set cut generators we implemented.

\subsubsection*{Branching Variable Selecting Rule}
When selecting the branching variable for a subproblem of \eqref{eq:mip.mip}, we mimic the technique in Chinneck's heuristic algorithm. Let $S_{1}$ be the set of constraints of \eqref{eq:mip.iiseps} that correspond to the constraints of the subproblem. After solving an elastic program of $S_{1}$, we estimate the drop of SINF that each constraint can give (see Section~\ref{sec:heur}). The constraint $b$ which can give the most significant drop has the best chance to be a member of the MIN IIS COVER of \eqref{eq:mip.iiseps} according to Observation~3 in~\cite{Chinneck}. The binary variable $s_{b}$ that corresponds to $b$ is selected as the branching variable.

\subsubsection*{Candidate Problem Selection}
By fixing $s_{b}$, we get two new candidate problems, one with $s_{b} = 1$, the other with $s_{b} = 0$. In the problem selection step, a depth first strategy is used and the problem with $s_{b} = 1$ is selected as the new problem to process. As mentioned in Section~\ref{sec:mip}, fixing the binary variable $s_{b}$ to $1$ has the effect of removing constraint $b$ from \eqref{eq:mip.iis2}. As the algorithm continues to dive in the problem tree, \eqref{eq:mip.iis2} will usually become feasible quickly due to the accuracy of Chinneck's algorithm. At that point, the candidate problem will be fathomed because the optimal objective value will be $0$, an integral solution. This strategy will hopefully keep the depth of the search tree small, so then we would have a good chance to have small search tree for the whole problem.

\subsection{A Binary Search Strategy}
\label{sec:alg.bin}

The idea of fixing the $\epsilon$ in Section~\ref{sec:mip} cannot be used in practice, and we can not guarantee the accuracy of the solutions by assigning the $\epsilon$ an arbitrary small number. However, we can find an accurate solution for a problem by solving several MIPs. In this idea, the MIP~\eqref{eq:mip.mip} needs to be changed to the following form:
\begin{eqnarray}
  \label{eq:alg.bin}
  \textrm{minimize} \qquad -\epsilon & & \nonumber \\
  \textrm{subject to} \qquad \qquad \qquad \qquad
  \sum_{j = 1}^{n} s_{j} & \leq & \mathrm{guess} \nonumber \\
  a_{j}x + s_{j}M & \geq & \epsilon  \qquad \forall j \in \{1, 2, \ldots, n\}\\
  \epsilon & \geq & 0 \nonumber \\
  s_{j} & \in & \{0 , 1\} \qquad \forall j \in \{1, 2, \ldots, n\} \nonumber \\
  - c \leq & x_{i} & \leq c \qquad \forall i \in \{1, 2, \ldots, d\} \nonumber
\end{eqnarray}
In this formulation $\epsilon$ is a variable, and there is also one more constraint in which $\mathrm{guess}$ is a value we want to test the depth against. If the optimal value of the objective function is $0$, $\mathrm{guess}$ is smaller than the depth of point~$A_{p}$. Therefore, we can use binary search to try different values of $\mathrm{guess}$.

In the binary search algorithm our branch and cut algorithm is used as a subroutine which solves (partially) a MIP per invocation. The subroutine will finish as soon as it finds a feasible solution which gives a nonzero $\epsilon$, because a nonzero $\epsilon$ implies that $\mathrm{guess}$ is no less than the depth of $A_{p}$. The binary search algorithm maintains a cut pool containing the cutting planes generated in the early subroutines. The cuts will be used as indexed cuts for later subroutines.

\section{Implementation}
\label{sec:impl}
Our algorithm is implemented with the BCP library from the \textbf{COIN-OR} project~\cite{Coin}, along with the Osi, Clp and Cgl libraries from this project. For the binary search algorithm, we just make some adjustments to the branch and cut algorithm, and use it as a subroutine. BCP is a set of C++ classes and functions which manage the search tree. It does not contain any LP solver or cutting plane generator. The Osi (Open Solver Interface) library is used as the interface between BCP and an LP solver. Clp (COIN-OR linear programming) is used as the LP solver in our implementation. Some commercial LP solvers, like CPLEX or Xpress, might be faster than Clp, but we want other researchers to have easy access to our codes. Of course, it is possible to change the code in order to use other LP solvers thanks to Osi. Cgl (Cut Generation Library) is a collection of cut generators, which is used to generate cutting planes for BCP. BCP is designed for parallel execution in the master slave paradigm; it also supports sequential execution.

Our implementation is based on the example \textbf{BAC}~\cite{Margot} written by Margot. We also implemented Chinneck's heuristic algorithm~\cite{Chinneck} and two cut generators. The cut generators will receive the solutions of an LP relaxation from the LP process and generate cutting planes based on these solutions.

Bremner, Fukuda, and Rosta developed a primal-dual algorithm for the halfspace depth problem in~\cite{David}. Their algorithm is to find the minimum traversal of all MDSs in the input data set. They developed a library to generate the MDSs (recall that MDSs are [the same as] IISs), and that library is based on Avis' \textbf{Lrslib} library~\cite{Avis}. We use this MDS generating library to generate IISs for our algorithm in the first cut generator. In this cut generator, we use the pseudo-knapsack technique in Section~\ref{sec:alg} to find a set of binary variables. Then we identify the set of points in the input data set that correspond to the binary variables. This set of points are then used as the input for the MDS generate library to generate a set of IISs. Finally we formulate a set of cutting planes in the form of \eqref{eq:mip.iisinq}, one for each IIS found.

BCP does not support global cuts currently. Any cuts added to a subproblem are only available to its children. This is unfortunate for us, since all hitting set cuts are globally valid. On the other hand, keeping too many cuts can slow down each node (Bremner, Fukuda, and Rosta~\cite{David} observed adding all IIS cuts made solving the LP relaxation very slow). To use the candidate problem selection strategy in Section~\ref{sec:alg}, we set the tree search strategy to the depth first search and the child preference to dive down.

\section{Computational Experiments}
\label{sec:test}
Our algorithms have been tested on a Myrinet/4-way cluster that consists of dual socket SunFire x4100 nodes which are populated with 2.6 GHz dual-core Opteron 285 SE processors and 4 GB RAM per core. We set the CPU time limit to 60 minutes in these tests. For readability, we can not list the raw experimental results, and report only a summary in this section.

In practice, if the value of the $\epsilon$ in the MIP is too small compared with the coefficients of the constraints, the linear programming solver would round it to zero. Our remedy is scaling the data items, and making the norms similar and relatively small. For a few data sets, the depth values reported by our algorithm with different strategies or parameters are different (with a difference of $1$). This could be caused by bugs in our codes or bugs in BCP, but we also suspect this is due to some numerical issue.

\subsection{Results for Random Generated Data}
\label{sec:test.rand}
The data sets tested in this section are a subset of the data sets used in~\cite{David}, and they are randomly generated. For every data set we compute the depth of the first point, which is the origin. For all the tests in this section, the $\epsilon$ of the MIP~\eqref{eq:mip.mip} is set to $0.00001$. Comparing with the results of the primal-dual algorithm and the binary search algorithm, the depth values computed with our branch and cut algorithm (with BIS cut generator) are accurate. Therefore the $\epsilon$ is small enough.

\subsubsection{Comparing Branching Rules}
\label{sec:test.rand.bran}
We first test our algorithm with the first hitting set cut generator in Section~\ref{sec:impl}, the one implemented with the MDS generating library, and with the greedy branching rules (see Section~\ref{sec:alg}). We generate $10$ cuts in one iteration of the LP process. If the objective value is improved by $0.001$, the LP process will do another iteration. When the MDS cut generator is used, most of the CPU time is spent on cutting plane generation. If more cuts are generated in one iteration, the algorithm will be slowed down, but will be more memory efficient. In Figure~\ref{fig:test.bran-n50} and Figure~\ref{fig:test.bran-sd5} we compare the performance of BCP's default strong with the greedy branching rules. The figures show (particularly in Figure~\ref{fig:test.bran-sd5}) that strong branching gives a little better performance for most problems, probably because less search tree nodes are processed. For many difficult problems, the greedy branching works better. In those difficult cases, greedy branching spent much less time on branching, although more search tree nodes would be processed.
\begin{figure}[!ht]
  \centering
  \includegraphics[width=0.6\textwidth]{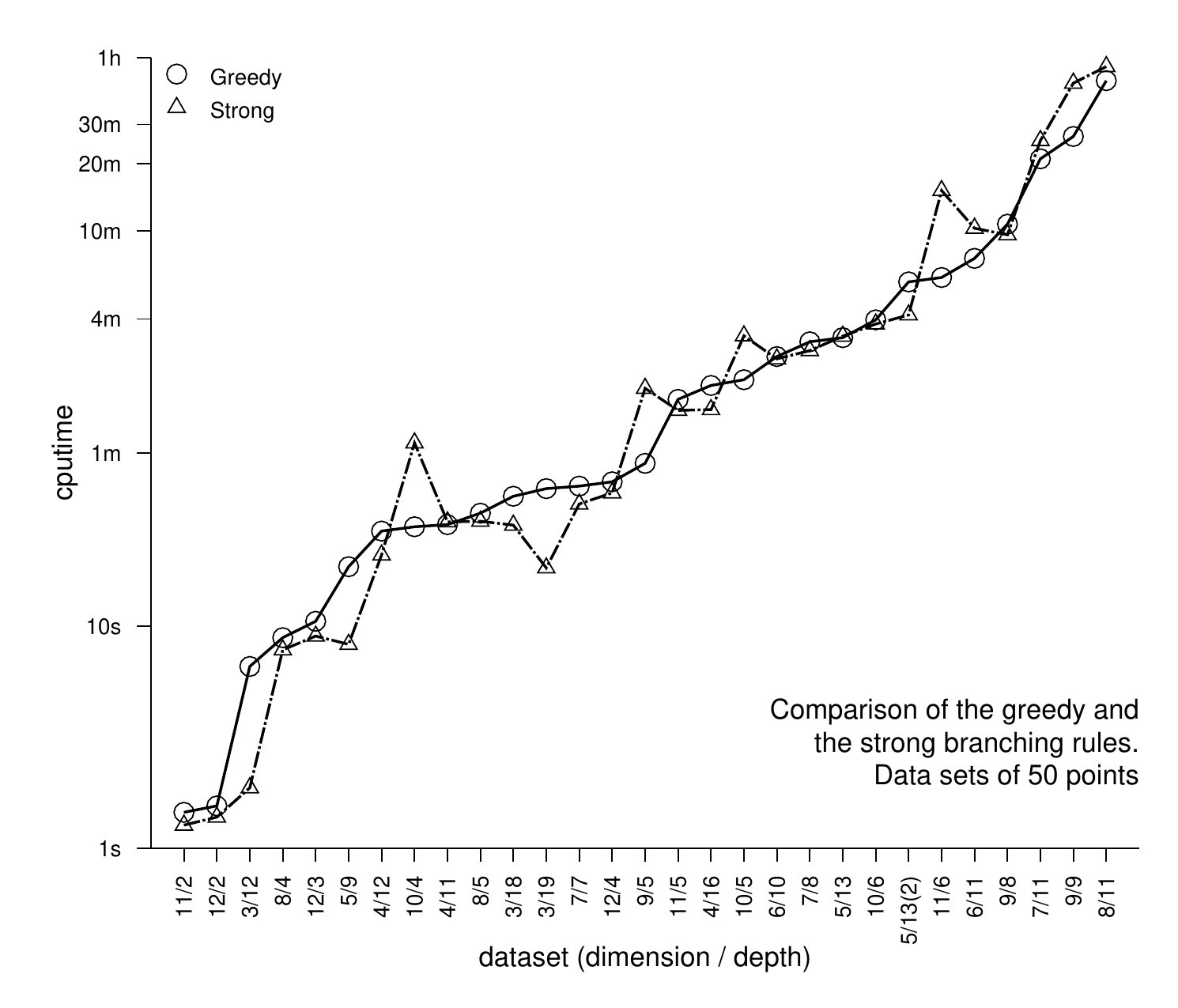}
  \caption{Comparison of different branching rules}
  \label{fig:test.bran-n50}
\end{figure}

\begin{figure}[!ht]
  \centering
  \includegraphics[width=0.6\textwidth]{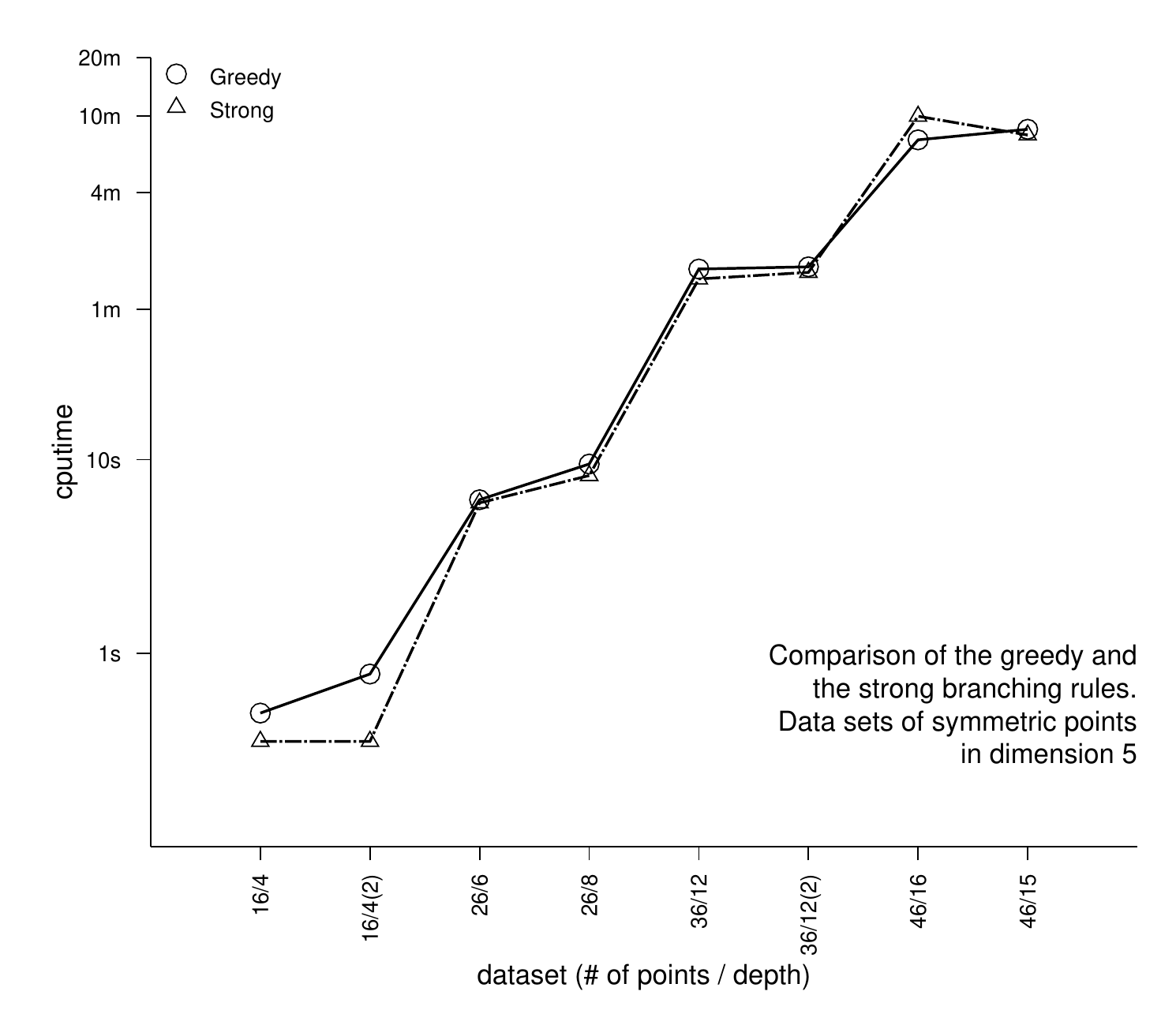}
  \caption{Comparison of different branching rules}
  \label{fig:test.bran-sd5}
\end{figure}

\subsubsection{Comparing Cut Generators}
\label{sec:test.rand.cut}

With the BIS cut generator, much less CPU time will be used to generate cuts, and the algorithm has better overall performance, although the search tree is larger. The BIS cut generator uses floating point arithmetic, the same as the rest of the system. The MDS cut generator uses exact arithmetic which is required for \textbf{Lrslib}. This is a factor which slows down the MDS cut generator. In fact many cuts are generated repeatedly in the optimization process when the BIS cut generator is used. The pseudo-knapsack idea in Section~\ref{sec:alg} can force the algorithm to generate a different cut each time, but the performance turns out to be worse, and more search tree nodes will be processed. With the pseudo-knapsack idea, if the algorithm generates a cut with a probability less than $1$ on each node, the performance will be improved to some extent, although still worse than that without pseudo-knapsack. We also observe that the pseudo-knapsack idea can make the algorithm faster when the greedy branching is applied. This suggests that the pseudo-knapsack idea interferes with the strong branching rule. The reason might be that this idea makes the values of the binary variables in the solution of LP relaxation closer to each other.

\begin{figure}[!ht]
  \centering
  \includegraphics[width=0.6\textwidth]{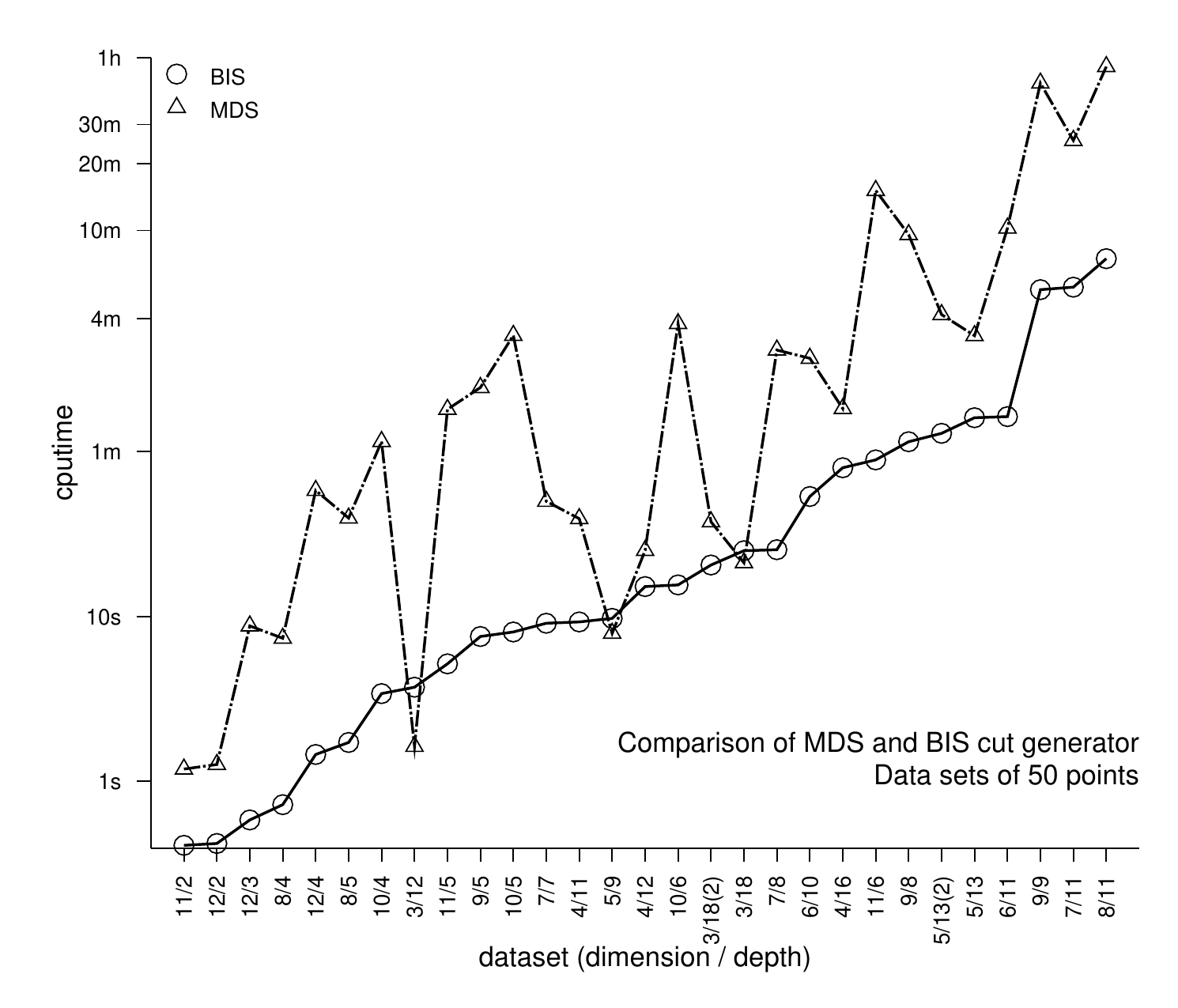}
  \caption{Comparison of different cutting plane generators}
  \label{fig:test.cut-n50}
\end{figure}

\begin{figure}[!ht]
  \centering
  \includegraphics[width=0.6\textwidth]{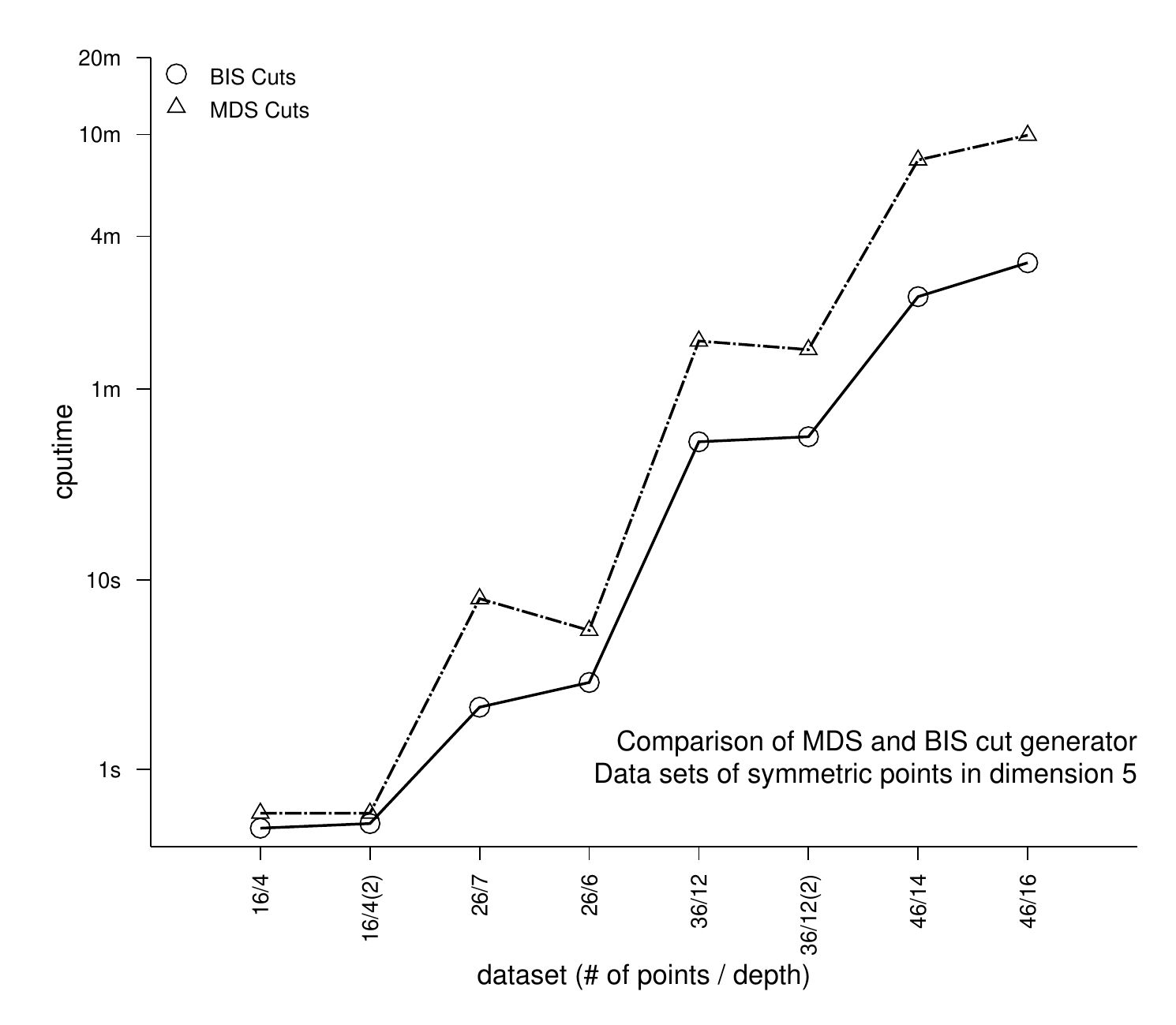}
  \caption{Comparison of different cutting plane generators}
  \label{fig:test.cut-sd5}
\end{figure}

In Figure~\ref{fig:test.cut-n50} and Figure~\ref{fig:test.cut-sd5} we compare the performance of our algorithm with the two different cutting plane generators. The general cut generators in Cgl can barely generate cuts for our algorithm, and do not improve the performance.

\subsubsection{Comparing Algorithms}
\label{sec:test.rand.alg}

The binary search algorithm does not perform too badly, but the primal-dual algorithm is very slow on some hard problems. In Figure~\ref{fig:test.bbp-n50} and Figure~\ref{fig:test.bbp-sd5} we compare the performance of the binary search algorithm, the primal-dual algorithm, and the branch and cut algorithm. The branch and cut algorithm works best most of the time. The performance of the binary search algorithm is actually quite fast (as well as being more numerically stable). Sometimes the binary search algorithm even works faster than the branch and cut algorithm. The reason is that the MIPs for the binary search algorithm are usually easier to solve, and the tricks used in the binary search algorithm also help to speed up the algorithm. In contrast, the primal-dual algorithm can be slow on large problems.

\begin{figure}[!ht]
  \centering
  \includegraphics[width=0.6\textwidth]{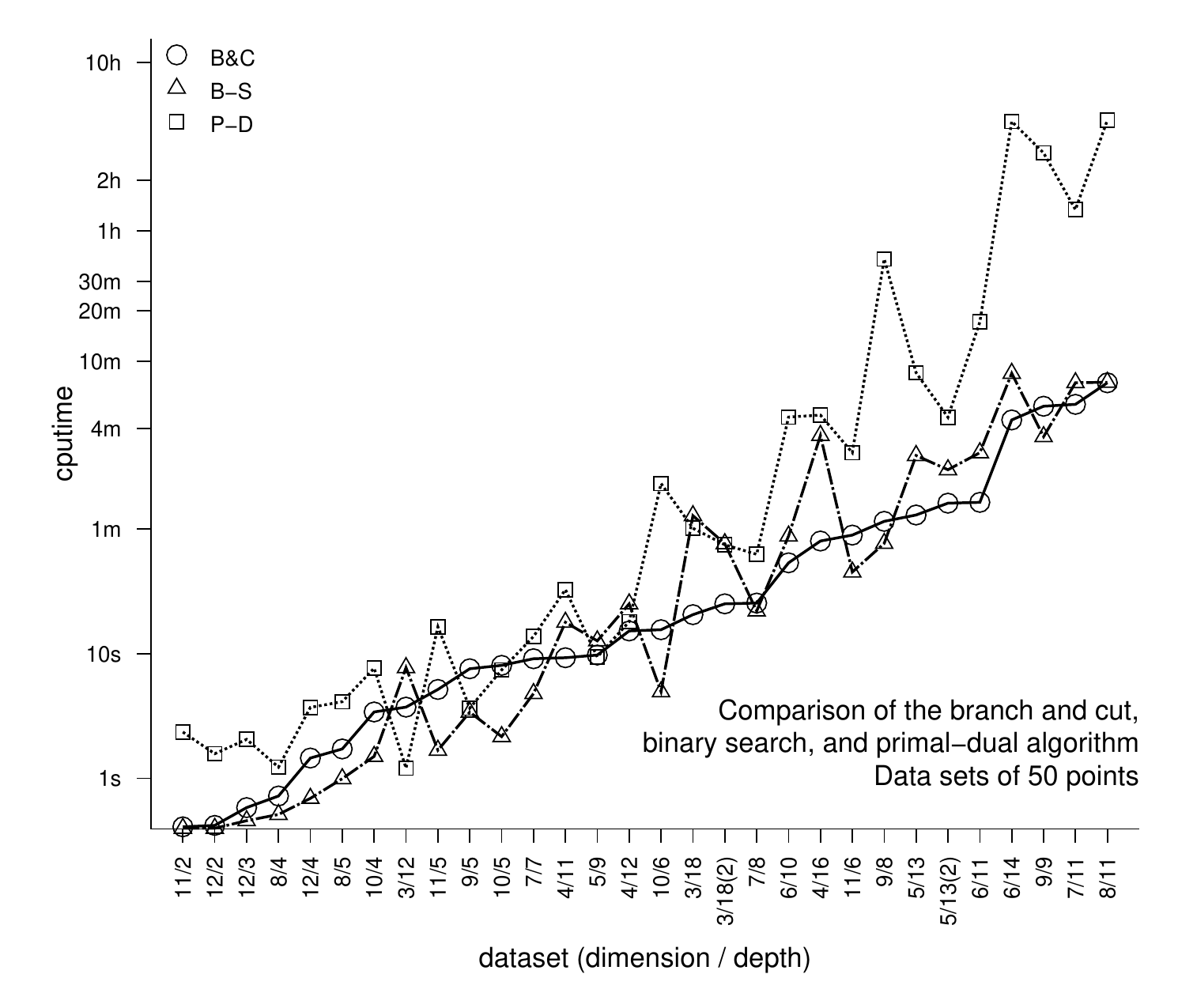}
  \caption{Comparison of different algorithms}
  \label{fig:test.bbp-n50}
\end{figure}

\begin{figure}[!ht]
  \centering
  \includegraphics[width=0.6\textwidth]{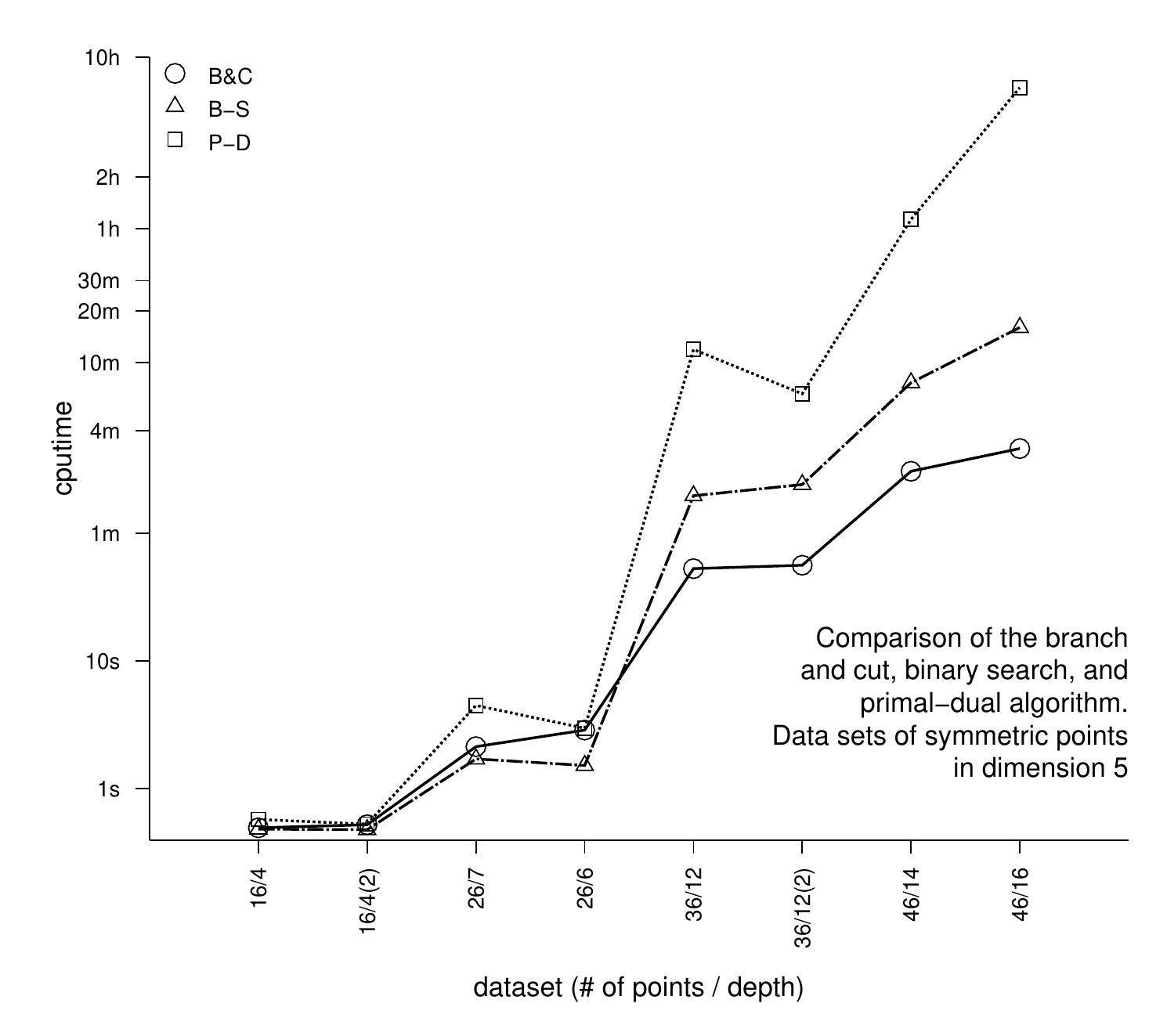}
  \caption{Comparison of different algorithms}
  \label{fig:test.bbp-sd5}
\end{figure}

\subsubsection{Parallel Execution}
\label{sec:test.rand.para}

All the above tests are done with the sequential version of our algorithm. Some tests of parallel version of the branch and cut algorithm are given in Figure~\ref{fig:test.paral}. Two data sets are used to test the algorithm. The performance with one processor is the performance of the sequential version of the algorithm. When two processors are applied, one of them is used for the slave process (LP process), and when four processors are applied, three of them are used for the slave process. So we expect a speedup of $3$ for four processors, $7$ for eight processors, and so forth. The dashed line in the figure indicates the linear speedup with respect to number of LP processes. From the figure we can see that the speedup is almost linear.

\begin{figure}[!ht]
  \centering
  \includegraphics[width=0.6\textwidth]{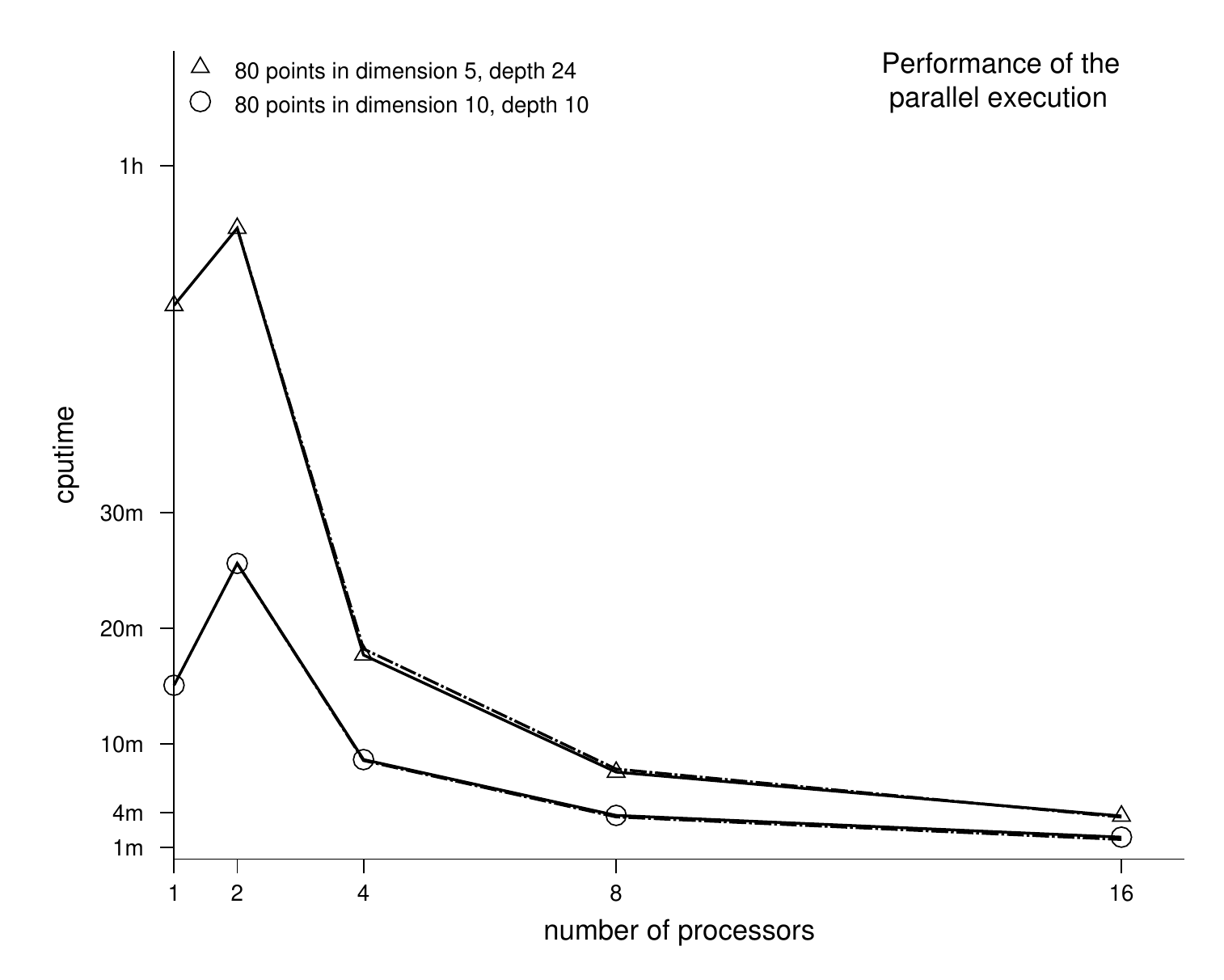}%a better one is required.
  \caption{The Performance of Parallel Execution}
  \label{fig:test.paral}
\end{figure}

\subsection{Results for ANOVA Data}
\label{sec:test.2}
The ANOVA data tested in this section are randomly generated according to the scheme discussed in Section~\ref{sec:anova}. There are some duplicated points in every data set. The same duplicated data are either all inside the halfspace or all outside the halfspace when finding the depth of a point. Therefore, the binary variables associated with these data will be one or zero simultaneously. When formulating the MIP, we keep only one of the duplicated constraints and assign a weight of the number of the duplicated constraints to the associated binary variable in the objective function. For every data set, the depth of the origin is computed. Table~\ref{tab:test.mip} gives a comparison of the performance of our algorithm with different integer program formulations, the simple MIP as~\eqref{eq:mip.mip} and the weighted MIP as described above. The sequential algorithm is used for these tests. The upper bound of the number of duplications (per data point) in the data set is given in the third column. From the table we can see that the algorithm using the weighted MIP is much faster, because there are many fewer rows in the MIP, and correspondingly fewer binary variables.

\begin{table}[!htb]
  \centering
    \begin{tabular}[center]{|c|c|c|c|c|c|}
    \hline
    Point \# & Dimension & Duplication & Depth & Simple MIP & Weighted MIP \\
    \hline
    32 & 8 & 2 & 5 & 0.62 & 0.19 \\%4-4-2.1
    32 & 8 & 2 & 10 & 5.89 & 2.06 \\%4-4-2.2
    32 & 8 & 2 & 7 & 0.90 & 0.24 \\%4-4-2.3
    32 & 8 & 2 & 7 & 0.54 & 0.31 \\%4-4-2.4
    32 & 8 & 2 & 4 & 0.14 & 0.03 \\%4-4-2.5
    48 & 8 & 3 & 5 & 0.11 & 0.08 \\%4-4-3.1
    48 & 8 & 3 & 11 & 4.39 & 0.82 \\%4-4-3.2
    48 & 8 & 3 & 10 & 2.17 & 0.62 \\%4-4-3.3
    48 & 8 & 3 & 9 & 2.04 & 0.28 \\%4-4-3.4
    48 & 8 & 3 & 13 & 27.72 & 2.00 \\%4-4-3.5
    64 & 8 & 4 & 11 & 1.92 & 0.22 \\%4-4-4.1
    64 & 8 & 4 & 17 & 91.89 & 2.98 \\%4-4-4.2
    64 & 8 & 4 & 18 & 200.82 & 3.48 \\%4-4-4.3
    64 & 8 & 4 & 15 & 30.85 & 0.87 \\%4-4-4.4
    64 & 8 & 4 & 16 & 28.94 & 1.60 \\%4-4-4.5
    72 & 12 & 2 & 13 & 147.59 & 22.19 \\%6-6-2.1
    72 & 12 & 2 & 18 & 807.20 & 250.77 \\%6-6-2.2
    72 & 12 & 2 & 14 & 85.94 & 33.52 \\%6-6-2.3
    72 & 12 & 2 & 17 & 529.16 & 69.92 \\%6-6-2.4
    72 & 12 & 2 & 20 & outmem & 469.81 \\%6-6-2.5
    108 & 12 & 3 & 26 & outmem & 519.49 \\%6-6-3.1
    108 & 12 & 3 & 24 & outmem & 264.20 \\%6-6-3.2
    108 & 12 & 3 & 24 & outmem & 341.87 \\%6-6-3.3
    108 & 12 & 3 & 29 & outmem & 1435.35 \\%6-6-3.4
    108 & 12 & 3 & 22 & outmem & 105.49 \\%6-6-3.5
    144 & 12 & 4 & 33 & outmem & 1238.99 \\%6-6-4.1
    144 & 12 & 4 & 39 & outmem & 1760.49 \\%6-6-4.2
    144 & 12 & 4 & 40 & outmem & 1527.83 \\%6-6-4.3
    144 & 12 & 4 & 33 & outmem & 544.95 \\%6-6-4.4
    144 & 12 & 4 & 29 & outtime & 330.57 \\%6-6-4.5
    \hline
  \end{tabular}
  \caption{Performance with different integer program formulations}
  \label{tab:test.mip}
\end{table}

\section{Conclusions and Directions for Future Work}
\label{sec:fuwok}

Comparing the branch and cut algorithm with the binary search algorithm and the primal-dual algorithm, we conclude that the branch and cut algorithm is the fastest, although with some numerical issues. The binary search is slower, but still faster than the primal-dual algorithm and more stable. Fast cutting plane generators are important, because the BIS cut generator improves the performance dramatically. The strong branching rule is a little faster than the greedy branching rule on most of the tests, but the greedy branching rule is faster on many hard problems (i.e.\ those with large depth). The branch and cut algorithm has almost linear speed up for parallel execution. On ANOVA data sets, the duplicated constraints are removed with the weighted MIP formulation. With this modification, the algorithm solved all the problems we tested.

In some applications, only the median of the data set is interesting. With the current algorithm we have to compute the depth of every data item in order to find the median. Finding a fast algorithm for computing the median is open for future work. 

The idea for finding a proper $\epsilon$ described in Section~\ref{sec:mip} is not practical. Another open problem is a method to find a practical $\epsilon$ for MIP~\eqref{eq:mip.mip}. It may be possible to solve an MIP based on the strict inequalities of system~\eqref{eq:mip.iis2}. Then we do not need to consider $\epsilon$. The binary search algorithm does not require a value for $\epsilon$, and it can report a proper value for $\epsilon$. Ironically, this algorithm finds a proper value after solving the halfspace depth problem.

As we noticed in Section~\ref{sec:test.rand.cut}, the pseudo-knapsack idea slows down the strong branching when using the BIS cut generator. An idea for reducing redundant cut generation in the BIS cut generator that does not interfere with the strong branching would be interesting.

\bibliographystyle{plain}
\bibliography{hsdepth}

\begin{thebibliography}{10}

\bibitem{Coin}
\url{http://www.coin-or.org/}.

\bibitem{Schuermann06}
Personal communication.

\bibitem{Amaldi}
E.~Amaldi and V.~Kann.
\newblock The complexity and approximability of finding maximum feasible
  subsystems of linear relations.
\newblock {\em Theoretical Computer Science}, 147(1--2):181--210, 1995.

\bibitem{Avis}
D.~Avis.
\newblock Lrs home page.
\newblock \url{http://cgm.cs.mcgill.ca/~avis/C/lrs.html}.

\bibitem{Bremner1}
D.~Bremner, D.~Chen, J.~Iacono, S.~Langerman, and P.~Morin.
\newblock Output-sensitive algorithms for {T}ukey depth and related problems.
\newblock {\em Statistics and Computing}, 18:259--266, 2008.

\bibitem{David}
D.~Bremner, K.~Fukuda, and V.~Rosta.
\newblock Primal dual algorithms for data depth.
\newblock In {\em Data Depth: Robust Multivariate Analysis, Computational
  Geometry, and Applications}, AMS DIMACS Book Series, 2006.

\bibitem{Brown}
G.~Brown and G.~Graves.
\newblock Elastic programming: a new approach to large-scale mixed integer
  optimization.
\newblock ORSA/TIMS Conference, 1975.

\bibitem{Chakravarti}
N.~Chakravarti.
\newblock Some results concerning post-infeasibility analysis.
\newblock {\em European Journal Of Operational Research}, 73(1):139--143, 1994.

\bibitem{Chinneck2}
J.~W. Chinneck.
\newblock An effective polynomial-time heuristic for the minimum-cardinality
  {IIS} set-covering problem.
\newblock {\em Annals of Mathematics and Artificial Intelligence},
  17(1--2):127--144, 1996.

\bibitem{Chinneck1}
J.~W. Chinneck.
\newblock Feasibility and viability.
\newblock In T.~G\'al and H.~J. Greenberg, editors, {\em Advances in
  Sensitivity Analysis and Parametric Programming}, volume~6 of {\em
  International Series in Operations Research \& Management Science},
  chapter~14, pages 1--41. Kluwer Academic Publishers, Dordrecht, 1997.

\bibitem{Chinneck}
J.~W. Chinneck.
\newblock Fast heuristic for the maximum feasible subsystem problem.
\newblock {\em INFORMS Journal on Computing}, 13(3):210--223, 2001.

\bibitem{Chvatal}
V.~Chv\'atal.
\newblock {\em Linear Programming}.
\newblock W. H. Freeman and Company, New York, 1983.

\bibitem{GareyJohnson}
M.~R. Garey and D.~S. Johnson.
\newblock {\em Computers and Intractability; A Guide to the Theory of
  NP-Completeness}.
\newblock W. H. Freeman \& Co., New York, NY, USA, 1990.

\bibitem{Gleeson}
J.~Gleeson and J.~Ryan.
\newblock Identifying minimally infeasible subsystems of inequalities.
\newblock {\em ORSA Journal on Computing}, 2(1):61--63, 1990.

\bibitem{Gruber}
P.M. Gruber and C.G. Lekkerkerker.
\newblock {\em Geometry of Numbers}.
\newblock North Holland, Amsterdam, 2 edition, 1987.

\bibitem{Johnson}
D.~S. Johnson and F.~P. Preparata.
\newblock The densest hemisphere problem.
\newblock {\em Theoretical Computer Science}, 6(1):93--107, 1978.

\bibitem{Khach}
L.~Khachiyan, E.~Boros, K.~Borys, K.~Elbassioni, and V.~Gurvich.
\newblock Generating all vertices of a polyhedron is hard.
\newblock In {\em ACM-SIAM Symposium on Discrete Algorithms (SODA06)}, 2006.

\bibitem{Margot}
F.~Margot.
\newblock {BAC}: A {BCP} based branch-and-cut example.
\newblock Tepper School of Business, Carnegie Mellon University, 2006.

\bibitem{mizera2002}
I.~Mizera.
\newblock On depth and deep points: a calculus.
\newblock {\em Ann. Statist.}, 30(6):1681--1736, 2002.

\bibitem{mizera:anova}
I.~Mizera.
\newblock Depth in {ANOVA} models: first derivations.
\newblock manuscript, 2006.

\bibitem{Parker}
M.~Parker.
\newblock {\em A Set Covering Approach to Infeasibility Analysis of Linear
  Programming Problems and Related Issues}.
\newblock PhD thesis, Mathematics Department, University of Colorado at Denver,
  1995.

\bibitem{Pfetsch}
M.~Pfetsch.
\newblock {\em The Maximum Feasible Subsystem Problem and Vertex-Facet
  Incidences of Polyhedra}.
\newblock PhD thesis, School of Mathematics and Natural Sciences, Technical
  University of Berlin, 2002.

\bibitem{Sankaran}
J.~K. Sankaran.
\newblock A note on resolving infeasibility in linear-programs by constraint
  relaxation.
\newblock {\em Operations Research Letters}, 13(1):19--20, 1993.

\end{thebibliography}
\end{document}